\let\doendproof\endproof
\renewcommand\endproof{~\hfill\qed\doendproof}
\newtheorem{observation}[theorem]{Observation}
\title{Linear-time Algorithms for\\ Proportional Apportionment}
\author{Zhanpeng Cheng \and David Eppstein}
\institute{Department of Computer Science, University of California, Irvine, USA}
\begin{document}
\maketitle

\pagestyle{plain} 

\begin{abstract}
The apportionment problem deals with the fair distribution of a discrete set of $k$ indivisible resources (such as legislative seats) to $n$ entities (such as parties or geographic subdivisions). Highest averages methods are a frequently used class of methods for solving this problem. We present an $O(n)$-time algorithm for performing apportionment under a large class of highest averages methods. Our algorithm works for all highest averages methods used in practice.
\end{abstract}

\section{Introduction}
After an election, in parliamentary systems based on party-list proportional representation, the problem arises of allocating seats to parties so that each party's number of seats is (approximately) proportional to its number of votes~\cite{Lijphart-86}. Several methods, which we survey in more detail below, have been devised for calculating how many seats to allocate to each party. Often, these methods involve \emph{sequential allocation} of seats under a system of priorities calculated from votes and already-allocated seats. For instance, the Sainte-Lagu\"e method, used for elections in many countries, allocates seats to parties one at a time, at each step choosing the party that has the maximum ratio of votes to the denominator $2s+1$, where $s$ is the number of seats already allocated to the same party.

\emph{Legislative apportionment}, although mathematically resembling seat allocation, occurs at a different stage of the political system, both in parliamentary systems and in the U.S. Congress~\cite{Athanasopoulos-93,Owens-21}. It concerns using population counts to determine how many legislative seats to allocate to each state, province, or other administrative or geographic subdivision, prior to holding an election to fill those seats. Again, many apportionment methods have been developed, some closely related to seat allocation methods. For instance, a method that generates the same results as Sainte-Lagu\"e (calculated by a different formula) was proposed by Daniel Webster for congressional seat apportionment. However, although similar in broad principle, seat allocation and apportionment tend to differ in detail because of the requirement in the apportionment problem that every administrative subdivision have at least one representative. In contrast, in seat allocation, sufficiently small parties might fail to win any seats and indeed some seat allocation methods use artificially high thresholds to reduce the number of represented parties.

We may formalize
these problems mathematically as a form of \emph{diophantine approximation}:
we are given a set of $k$ indivisible resources (legislative seats) to be distributed to $n$ entities (parties or administrative subdivisions), each with score $v_i$ (its vote total or population), so that the number of resources received by an entity is approximately proportional to its score. The key constraint here is that the entities can only receive an integral amount of resources; otherwise, giving the $i$th entity $k v_i / \sum v_i$ units of resource solves the problem optimally. 
Outside of political science, forms of the apportionment problem also appear in statistics in the problem of rounding percentages in a table so that they sum to 100\% \cite{Athanasopoulos-roundper-94} and in manpower planning to allocate personnel \cite{Mayberry-78}.

Broadly, most apportionment methods can be broken down into two classes: \emph{largest remainder methods} in which a fractional solution to the apportionment problem is rounded down to an integer solution, and then the remaining seats are apportioned according to the distance of the fractional solution from the integer solution, and \emph{highest averages methods} like the Sainte-Lagu\"e method described above, in which seats are assigned sequentially prioritized by a combination of their scores and already-assigned seats. Largest remainder methods are trivial from the algorithmic point of view, but are susceptible to certain electoral paradoxes. Highest averages methods avoid this problem, and are more easily modified to fit different electoral circumstances, but appear a priori to be slower. When implemented naively, they might take as much as $O(n)$ time per seat, or $O(nk)$ overall. Priority queues can generally be used to reduce this naive bound to $O(\log n)$ time per seat, or $O(k\log n)$ overall~\cite{Campbell-ADM-07}, but this is still suboptimal, especially when there are many more seats than parties $(k\gg n)$. We show here that many of these methods can be implemented in time $O(n)$, an optimal time bound as it matches the input size. We do not expect this speedup to have much effect in actual elections, as the time to compute results is typically minuscule relative to the time and effort of conducting an election; however, the speedup we provide may be of benefit in simulations, where a large number of simulated apportionment problems may need to be solved in order to test different variations in the parameters of the election system or a sufficiently large sample of projected election outcomes.

\subsection{Highest averages}
We briefly survey here highest averages methods (or Huntington methods), a class of methods used to solve the apportionment problem~\cite{Huntington-28,BalinYoung-hunt-77}. Balinski and Young~\cite{BalinYoung-book-82} showed that divisor methods, a subclass of the highest averages methods, are the only apportionment methods that avoid undesirable outcomes such as the \emph{Alabama paradox}, in which increasing the number of seats to be allocated can cause a party's individual allocation to decrease.
Because they avoid problematic outcomes such as this one, almost all apportionment methods in use are highest averages methods.

In a highest averages method, a sequence of divisors $d_0, d_1, \dots$ is given as part of the description of the method and determines the method. To apportion the resources, each entity is assigned an initial priority $v_i / d_0$. The entity with the highest priority is then given one unit of resource and has its priority updated to use the next divisor in the sequence (i.e., if the winning entity $i$ is currently on divisor $d_j$, then its priority is updated as $v_i / d_{j+1}$). This process repeats until all the resources have been exhausted. The priorities $v_i / d_j$ are also called \emph{averages}, giving the method its name. \autoref{tab:divisors} gives the sequence of divisors for several common highest averages methods.

\begin{table}
\centering
\begin{tabular}{|l | l | l |}
  \hline
  \textbf{Method}    & \textbf{Other Names}              & \textbf{Divisors}  \\
  \hline
  Adams              & Smallest divisors                 & 0, 1, 2, 3, $\dots, j, \dots$ \\
  \hline
  Jefferson          & Greatest divisors, d'Hondt        & 1, 2, 3, 4, $\dots, j + 1, \dots$ \\
  \hline
  Sainte-Lagu\"e     & Webster, Major fractions          & 1, 3, 5, 7, $\dots, 2j + 1, \dots$ \\
  \hline
  Modified Sainte-Lagu\"e  & ---                         & 1.4, 3, 5, 7, $\dots$ \\
  \hline
  Huntington--Hill    & Equal proportions, Geometric mean & 0, \!$\sqrt{2}$, \!$\sqrt{6}$, \!$\dots, \sqrt{j (j+1)}, \dots$ \\
  \hline
  Dean               & Harmonic mean                     & 0, $4/3$, $12/5$, $\dots, \frac{2a(a+1)}{2a+1}, \dots$ \\
  \hline
  Imperiali   & ---                               & 2, 3, 4, 5, $\dots, j + 2, \dots$ \\
  \hline
  Danish      & ---                               & 1, 4, 7, 10, $\dots, 3j + 1, \dots$ \\
  \hline
\end{tabular}
\caption{Divisors for common highest averages methods.}
\label{tab:divisors}
\vspace{-5ex}
\end{table}

A zero at the start of the divisor sequence prioritizes the first assignment to each entity over any subsequent assignment, in order to ensure that (if possible) every entity is assigned at least one unit. If a zero is given, but the number of units is less than the number of entities, the entities are prioritized by their $v_i$ values.

\subsection{New results}
In this paper, we present an $O(n)$-time algorithm for simulating a highest averages method. Our algorithm works only for divisor sequences that are close to arithmetic progressions; however, this includes all methods used in practice, since this property is necessary to achieve approximately-proportional apportionment.
For divisor sequences that are already arithmetic progressions, our algorithm transforms the problem into finding the $k$th smallest value in the disjoint union of $n$ implicitly defined arithmetic progressions, which we solve in $O(n)$ time. For methods with divisor sequences close to but not equal to arithmetic progressions, we use an arithmetic progression to approximate the divisor sequence, and show that this still gives us the desired result.

\subsection{Related work}
An alternative view of a number of highest averages methods is to find a multiplier $\lambda > 0$ such that $\sum_i \left[ \lambda v_i \right] = k$, where $[\cdot]$ is a suitable rounding function for the method. The $i$th entity is then apportioned the amount $\left[ \lambda v_i \right]$. For example, the standard rounding function gives rise to the Sainte-Lagu\"e method and the floor function gives rise to the Adam method. For these methods, the problem can be solved in $O(n^2)$ time, or $O(n\log n)$ with a priority queue \cite{Dorfleitner-97, Zachariasen-05} 
\footnote{An anonymous reviewer suggested that linear-time algorithms were given previously in two Japanese papers \cite{Ito-EATCS-04, Ito-06}. However, we were unable to track down these papers nor could we determine whether their time was linear in the number of votes, seats, or parties. The second reference, in particular, does not appear to be on the IEICE website, neither searching by year and page number nor with broad search terms.}.
The algorithm works by initializing $\lambda = k / \sum_j v_j$ and iteratively choosing a new apportionment that reduces the difference between $\sum_i \left[ \lambda v_i \right]$ and $k$. The number of new apportionments can be shown to be at most $n$.

Selecting the $k$th smallest element in certain other types of implicitly defined sets has also been well studied. Gagil and Megiddo studied the assignment of $k$ workers to $n$ jobs, where the implicitly defined sets are induced by concave functions giving the utility of assigning $k_i$ workers to job $i$~\cite{GalilMegiddo-79}. Their $O(n \log^2 k)$ algorithm was improved by Frederickson and Johnson to $O(n + p \log(k/p))$ where $p = \min(k, n)$~\cite{FredericksonJohnson-82}. For implicit sets given as an $n \times m$ matrix with sorted rows and columns, Federickson and Johnson found an $O(h \log (2k/h^2))$ time algorithm, where $h = \min(\sqrt{k}, m)$ and $m \le n$ \cite{FredericksonJohnson-84}. Sorting the inputs would turn our problem  into sorted matrix selection, but the $O(n \log n)$ sorting time would already exceed our time bound.

\section{Preliminaries}

Given strictly increasing divisors $d_0, d_1, d_2, \dots$, our goal is to simulate the highest averages method induced by those divisors in time linear in the number of entities. Instead of directly selecting the entities with the $k$ largest priorities, we take advantage of the arithmetic progression structure of the divisors and consider the problem as selecting the $k$ smallest inverted priorities.  Associate the $i$th entity to the increasing sequence
\[
  A_i = \left\{ \frac{d_j}{v_i} : j = 0, 1, 2, \dots \right\}.
\]
Let $\mathcal{A} = \{A_1, A_2, \dots, A_n\}$ and $U(\mathcal{A})$ be the multiset formed from the disjoint union of the sequences. The problem is to find the value of the $k$th smallest element of $U(\mathcal{A})$.

We do not actually produce the $k$ smallest elements, only the value of the $k$th smallest one, allowing us to eliminate any dependence on $k$ in our time bounds. An explicit list of the $k$ smallest elements is also not necessary for the election problem, since we are primarily interested in the total amount of resources allocated to each entity, which can be calculated from the value of the $k$th smallest element. When a rank function (defined in the following paragraph) can be computed in constant time, we may use it to compute, in constant time for each $A_i$, the largest index $j$ such that $d_j/v_i$ is at most the computed value, which gives the allocation to entity~$i$. Producing only the value of the $k$th smallest element also sidesteps the issue of tie-breaking when several entities have the same priorities and are equally eligible for the last resource. The rules for breaking ties are application-dependent, so it is best to leave them out of the main algorithm.

For a sequence $A$, let $A(j)$ denote the $j$th element of the sequence $A$, with the first element at index $0$. We let $A(-1) = -\infty$ to avoid corner cases in the algorithm; however, when counting elements of $A$, this $-\infty$ value should be ignored.
Define the \emph{rank} of a number $x$ in $A$ as the number of elements of $A$ less than or equal to $x$. Equivalently, for a strictly monotonic sequence, this is the index $j$ such that $A(j) \le x < A(j+1)$. We denote the rank function by $r(x, A)$. For a set $\mathcal{A}$ of sequences, we define $r(x, \mathcal{A}) = \sum_{A \in \mathcal{A}} r(x, A)$. If the set $\mathcal{A}$ is clear from context, we will drop $\mathcal{A}$ and simply write $r(x)$.

In our algorithm, we assume that the rank function for each sequence $A_i$ can be computed in constant time. When the sequences are arithmetic progressions (as they are in most of the voting methods we consider), these functions can be computed using only a constant number of basic arithmetic operations, so this is not a restrictive assumption. The Huntington--Hill method involves square roots, but its rank function may still be calculated using a constant number of operations that are standard enough to be included as hardware instructions on modern processors.

Observe a small subtlety about the rank function: if $\tau$ is the $k$th smallest element, then $r(\tau)$ is not necessarily $k$. Indeed, $r(\tau)$ can be greater than $k$, as in the case where there are $k-1$ elements of $U(\mathcal{A})$ less than $\tau$ and $\tau$ is duplicated twice, in which case $r(\tau) = k + 1$. In general, we have $k \le r(\tau) \le k + n - 1$. The rank of the $k$th smallest element can still be characterized, through the following observation.
\begin{observation} \label{obs:tau-def}
$\tau$ is the value of the $k$th smallest element in $U(\mathcal{A})$ if and only if $r(\tau) \ge k$ and for all $x < \tau$, $r(x) < k$.
\end{observation}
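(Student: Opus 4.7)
My plan is to prove the biconditional by letting $\sigma$ denote the actual value of the $k$th smallest element of $U(\mathcal{A})$ and showing that, under the stated conditions, $\tau$ must coincide with $\sigma$. The only tool I will need is that $r(\cdot)$ is a monotone non-decreasing function, combined with the definition of ``$k$th smallest'' in the sorted order $s_1 \le s_2 \le \cdots$ of the multiset $U(\mathcal{A})$.

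For the forward direction, I would assume $\tau = \sigma = s_k$. Since $s_1, \dots, s_k$ are all $\le \tau$, we immediately get $r(\tau) \ge k$. For the other half, if $x < \tau$, then any element of $U(\mathcal{A})$ that is $\le x$ is strictly less than $\tau = s_k$, so in the sorted order it must occupy a position strictly before $k$. Hence at most $k-1$ entries of $U(\mathcal{A})$ are $\le x$, giving $r(x) \le k-1 < k$.

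For the backward direction, I would argue by contradiction, comparing $\tau$ to the true $k$th smallest $\sigma$. If $\sigma < \tau$, then applying the forward direction to $\sigma$ yields $r(\sigma) \ge k$, while the hypothesis on $\tau$ applied to $x = \sigma < \tau$ yields $r(\sigma) < k$, a contradiction. Symmetrically, if $\sigma > \tau$, the forward direction applied to $\sigma$ at $x = \tau < \sigma$ gives $r(\tau) < k$, contradicting $r(\tau) \ge k$. The only remaining possibility is $\sigma = \tau$.

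The main subtlety, rather than an obstacle, is that $U(\mathcal{A})$ is a multiset with possible duplicates, so I would be careful to argue with the sorted sequence $s_1 \le s_2 \le \cdots$ and to use the strict inequality $x < \tau$ (rather than $\le$) to deduce $s_j < s_k$ and hence $j \le k-1$; this is exactly the phenomenon flagged in the paragraph preceding the observation, where $r(\tau)$ may exceed $k$ but still satisfies the characterization.
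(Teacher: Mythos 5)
Your proof is correct and complete: the forward direction correctly uses the sorted enumeration $s_1 \le s_2 \le \cdots$ and the strict inequality $x < \tau$ to handle duplicates, and the backward direction's trichotomy against the true $k$th smallest value $\sigma$ closes the biconditional cleanly. The paper states this observation without proof, so there is no argument to compare against; yours is the natural one and matches the intent signaled by the paper's preceding discussion of why $r(\tau)$ may exceed $k$.
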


Now define $L(x, A)$ as the largest value in $A$ less than $x$; similary, define $G(x, A)$ as the smallest value in $A$ greater than $x$. Note that $L(x, A)$ and $G(x, A)$ can be computed easily from the rank of $x$: if $r = r(x, A)$, then $L(x, A)$ and $G(x, A)$ must be the value of either $A(r-1)$, $A(r)$, or $A(r+1)$. For $\mathcal{A}$, we use the similar notation $L(x, \mathcal{A})$ (respectivley $G(x, \mathcal{A})$) to denote the multiset of $L(x, A)$ (respectively $G(x, A)$) over all $A \in \mathcal{A}$.

Lastly, we make note of one notational convention. In our descriptions, the input variables to an algorithm may change within the algorithm, and it is useful to talk about both the values of the variables as they change and their initial values. Therefore, we use a tilde to denote the initial value of a variable, and the lack of a tilde to denote the changing value over the course of the algorithm. For example, $\tilde{\mathcal{A}}$ means the initial value and $\mathcal{A}$ means the value at an intermediate point of the algorithm.

\section{The Algorithm}

Our algorithm has three parts. In the first part, we show how the value of the $k$th smallest element of $U(\mathcal{A})$ can be found from a \emph{coarse solution}, a value whose rank is within $O(n)$ positions of~$k$. In the second part, we handle a special case of the problem in which every sequence in $\mathcal{A}$ is an arithmetic sequence, by showing how the rank function over $\mathcal{A}$ can be inverted in this case to produce a coarse solution. And in the last part, we deal with more general sequences, by showing how arithmetic sequences that approximate them can be used to produce a coarse solution.

\subsection{From Coarse to Exact Solutions}

In this section, we show how to compute the value of the $k$th smallest element of $U(\mathcal{A})$, given a coarse solution. A value $\xi$ is called a \emph{coarse solution} for $k$ if $|k - r(\xi)| \le cn$ for some constant $c$. Equivalently, this means there are only $O(n)$ elements between $\xi$ and the $k$th smallest element. Note that $\xi$ does not have to be an element of $U(\mathcal{A})$. 

Before presenting the algorithm, we first show that the coarse solution can be assumed to have a rank smaller than $k$.

\begin{algorithm}[tb]
\caption{\textsc{LowerRankCoarseSolution}($\tilde{\mathcal{A}}$, $\tilde{k}$, $\xi$)}
\label{alg:lower-rank-coarse-solution}
\begin{algorithmic}[1]
  \REQUIRE $\tilde{\mathcal{A}}$: set of increasing sequences; $\tilde{k}$: positive integer; $\xi$: a coarse solution with $r(\xi, \tilde{\mathcal{A}}) \ge k$
  \ENSURE another coarse solution $\xi'$ with $r(\xi', \tilde{\mathcal{A}}) < k$

  \STATE $\mathcal{A} \leftarrow \tilde{\mathcal{A}}$, $k \leftarrow \tilde{k}$, $u \leftarrow \xi$
  \LOOP
    \STATE $\bar{x} \leftarrow \text{median of } L(u, \mathcal{A})$
    \IF{$r(\bar{x}, \mathcal{A}) \ge k$}
      \STATE $u \leftarrow \bar{x}$
    \ELSIF{$r(\bar{x}, \mathcal{A}) < k - n$}
      \STATE $\mathcal{B} \leftarrow \{A : A \in \mathcal{A} \text{ and } L(u, A) \le \bar{x} \}$
      \STATE $k \leftarrow k - \sum_{A \in \mathcal{B}} r(\bar{x}, A)$
      \STATE $\mathcal{A} \leftarrow \mathcal{A} \setminus \mathcal{B}$
    \ELSE
      \RETURN $\bar{x}$
    \ENDIF
  \ENDLOOP
\end{algorithmic}
\end{algorithm}

\begin{lemma} \label{lem:lower-rank-coarse-solution}
Let $\xi$ be a coarse solution for $k$, and assume $r(x, A)$ can be computed in constant time for every $A \in \mathcal{A}$. If $r(\xi, \mathcal{A}) \ge k$, then another coarse solution $\xi'$ with $r(\xi', \mathcal{A}) < k$ can be found in $O(n)$ time.
\end{lemma}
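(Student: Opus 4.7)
The plan is to establish a loop invariant linking the current working state $(\mathcal{A}, k)$ to the initial values $(\tilde{\mathcal{A}}, \tilde{k})$, and then bound the total cost by charging each iteration either to a halving of $|\mathcal{A}|$ (case 2) or to a bounded run of consecutive case-1 steps.

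For correctness I would maintain the invariant
\[ r(y, \tilde{\mathcal{A}}) - \tilde{k} \;=\; r(y, \mathcal{A}) - k \]
for every $y$ in an interval that contains the $\bar{x}$ computed in every subsequent iteration. At initialization, $\mathcal{A} = \tilde{\mathcal{A}}$ and $k = \tilde{k}$, so the equality is trivial. Case 1 leaves $\mathcal{A}$ and $k$ unchanged, so the invariant is preserved. For case 2, the removed set $\mathcal{B}$ consists exactly of sequences with $L(u, A) \le \bar{x}$, that is, sequences with no element in the open interval $(\bar{x}, u)$. Hence for every $A \in \mathcal{B}$ and every $y \in [\bar{x}, u)$ we have $r(y, A) = r(\bar{x}, A)$, and the compensating update $k \leftarrow k - \sum_{A \in \mathcal{B}} r(\bar{x}, A)$ preserves the identity on this interval. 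One then argues by induction that each subsequent median lies in the range on which the identity is valid, using that after case 2 the new median of $L(u, \mathcal{A}\setminus\mathcal{B})$ strictly exceeds the $\bar{x}$ just used, and that case 1 only decreases $u$.

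When the loop exits via case 3, the returned $\bar{x}$ satisfies $k - n \le r(\bar{x}, \mathcal{A}) < k$. Applying the invariant yields $\tilde{k} - n \le r(\bar{x}, \tilde{\mathcal{A}}) < \tilde{k}$, so $\bar{x}$ is a coarse solution with rank strictly less than $\tilde{k}$, as required.

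For the running time, each iteration takes $O(|\mathcal{A}|)$ time for linear-time median selection and $|\mathcal{A}|$ constant-time rank queries. I would bound consecutive case-1 iterations by a constant as follows: if two successive case-1 iterations use medians $\bar{x}$ and then $\bar{x}' < \bar{x}$, then at least $n/2$ sequences satisfy $L(\bar{x}, A) > \bar{x}'$ (the top half of the new $L$-values), so at least $n/2$ elements of $U(\mathcal{A})$ lie in $(\bar{x}', \bar{x}]$, giving $r(\bar{x}, \mathcal{A}) - r(\bar{x}', \mathcal{A}) \ge n/2$. Since case 1 requires $r(\bar{x}', \mathcal{A}) \ge k$ and the invariant combined with the coarse-solution hypothesis forces $r(\bar{x}, \mathcal{A}) - k \le cn$, at most $2c$ consecutive case-1 iterations can occur. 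Each case-2 iteration halves $|\mathcal{A}|$, so summing $O(|\mathcal{A}|)$ across all iterations yields a geometric series totalling $O(n)$.

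The main obstacle I expect is the bookkeeping for the invariant: cleanly defining the interval on which the identity holds, showing that interleaved case-1 and case-2 updates keep every subsequent median inside it, and confirming that the returned $\bar{x}$ lies in it so that the translation back to $(\tilde{\mathcal{A}}, \tilde{k})$ introduces no extra error beyond the $n$-term slack already present in the case-3 conditions. Once this interval is nailed down, the case analysis and the geometric-series running-time bound are routine.
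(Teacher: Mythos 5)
Your correctness argument is sound and, if anything, more explicit than the paper's: the invariant $r(y,\tilde{\mathcal{A}})-\tilde{k}=r(y,\mathcal{A})-k$ on a shrinking interval containing all future medians is exactly the right bookkeeping, and your observation that removed sequences have no elements in $(\bar{x},u)$ is the key fact that makes the compensation of $k$ work. The problem is in the running-time analysis. Your claim that at most $2c$ consecutive case-1 iterations can occur rests on each such iteration decreasing the rank of $u$ by at least $n/2$; but the median is taken over the \emph{current} $\mathcal{A}$, so the guaranteed decrease is only $\lfloor|\mathcal{A}|/2\rfloor$, which can be far smaller than $n/2$ after case-2 removals. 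Meanwhile, case-2 steps essentially preserve the gap $r(u,\mathcal{A})-k$ (each removed $A$ contributes the same amount to $r(u,\mathcal{A})$ as the amount subtracted from $k$, up to an element exactly at $u$), so the gap can remain $\Theta(n)$ while $|\mathcal{A}|$ shrinks to $O(1)$, at which point $\Theta(n)$ consecutive case-1 iterations are possible. Patching your bound locally gives up to $2cn/|\mathcal{A}|$ case-1 steps per run at cost $O(|\mathcal{A}|)$ each, i.e.\ $O(n)$ per run over $O(\log n)$ runs, which is only $O(n\log n)$; and your final geometric series accounts only for the case-2 costs, not the case-1 costs.

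The paper closes this gap with a single global potential: $q$ equals the distance from the rank of $u$ to $\tilde{k}$ \emph{plus} the number of surviving sequences. Then $q=O(n)$ initially, $q\ge 0$ throughout (case 1 keeps $r(u,\mathcal{A})\ge k$), and \emph{every} iteration, of either type, decreases $q$ by $\Omega(|\mathcal{A}|)$ while costing $O(|\mathcal{A}|)$ — case 1 via the rank drop, case 2 via the sequence removals. Charging each iteration's cost to its decrease in $q$ gives $O(n)$ total without any per-run counting. Your two ingredients (rank drop of half the current $\mathcal{A}$ per case-1 step, halving of $\mathcal{A}$ per case-2 step) are exactly the ones needed; they just have to be combined into one amortized charge rather than a constant-times-geometric-series decomposition.
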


\begin{proof}
We find a $\xi'$ such that $\tilde{k} - n \le r(\xi', \tilde{\mathcal{A}}) < \tilde{k}$. To find this value, start with $u = \xi$. Then, repeatedly update $u$ and $\mathcal{A}$ as follows, until the median $\bar{x}$ of $L(u, \mathcal{A})$ has rank between $k - n$ and $k$:
\begin{enumerate}
\item If $r(\bar{x}, \mathcal{A}) \ge k$, then set $u = \bar{x}$.
\item If $r(\bar{x}, \mathcal{A}) < k - n$, then any sequence $A$ in $\mathcal{A}$ with $L(u,A) \le \bar{x}$ can no longer help us get closer to a value in the desired range, so we remove those sequences and update $k$ accordingly to compensate for their removal (i.e., subtract from $k$ the ranks $r(\bar{x}, A)$ over all removed sequences $A$).
\end{enumerate}
Algorithm~\ref{alg:lower-rank-coarse-solution} summarizes this procedure.

Let $q$ be the sum of two quantities: the distance from the rank of $u$ to $\tilde k$, and the number of sequences remaining in $\mathcal{A}$. Then $q$ is initially $O(n)$ by the assumption that $\tilde\xi$ is a coarse solution. Each iteration of the loop of the algorithm takes time $O(|\mathcal{A}|)$ and reduces $q$ by $O(|\mathcal{A}|)$ units, either by reducing the rank of $u$ in the first case or by eliminating sequences from $\mathcal{A}$ in the second case. Eventually (before $q$ can be reduced to zero) the algorithm must terminate, at which point it has taken time proportional to the total reduction in $q$, which is $O(n)$. When it terminates, the returned value is clearly a coarse solution whose rank is less than~$k$, as desired.
\end{proof}

We now present the algorithm to convert a coarse solution to an exact one. The algorithm is similar to a binary search, where we maintain both a lower bound and an upper bound that narrow the possible candidates as the algorithm progresses. The lower bound is initially derived from the coarse solution, which guarantees that it is close to the true solution. The main difference between our algorithm and a standard binary search is that we do not know the distribution of the sequences' elements within the bounds, so we cannot reduce search space by a constant proportion simply by splitting the range halfway between the lower and upper bounds. Instead, we split on the median of some well-chosen set within the bounds, which will allow us to reduce the number of candidates by at least $|\mathcal{A}| / 2$ in each step. To make the algorithm run in linear time, sequences that no longer have elements between the bounds are removed from $\mathcal{A}$. The procedure is similar to the one in Algorithm~\ref{alg:lower-rank-coarse-solution}. But instead of moving down in $U(\mathcal{A})$ with $L(\cdot, \mathcal{A})$, the algorithm moves up with $G(\cdot, \mathcal{A})$. Because of this, compensating $k$ when removing a sequence is no longer as straight-forward. In particular, we may need to query the rank of the upper bound $u$, so we need an extra variable to keep track of the possible under-compensation to the rank for these values. The details are presented in the theorem below.

\begin{theorem}\label{thm:coarse-to-true-solution}
Let $\mathcal{A}$ be a set of increasing sequences. Assume $r(x, A)$ can be computed in constant time for every $A \in \mathcal{A}$. If a coarse solution $\xi$ is given, then the value of the $k$th smallest element in $U(\mathcal{A})$ can be found in $O(n)$ time.
\end{theorem}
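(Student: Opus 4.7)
The plan is a median-driven bisection in which each iteration either advances a lower bound $\ell$ or lowers an upper bound $u$, and in both cases a constant fraction of the still-relevant sequences is effectively disposed of. By Lemma~\ref{lem:lower-rank-coarse-solution} I may assume $r(\xi, \tilde{\mathcal{A}}) < \tilde{k}$, so I initialize $\ell \leftarrow \xi$, $u \leftarrow +\infty$, $\mathcal{A} \leftarrow \tilde{\mathcal{A}}$, and $k \leftarrow \tilde{k}$. The invariant I preserve is that $\tau \in (\ell, u]$ and that $\tau$ is identified, via Observation~\ref{obs:tau-def}, by the current pair $(\mathcal{A}, k)$ together with stored offsets recording the fixed rank contributions of sequences already removed.

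Each iteration assembles $G(\ell, \mathcal{A})$ in $O(|\mathcal{A}|)$ time, selects its median $\bar{x}$ in linear time, and evaluates the adjusted rank at $\bar{x}$. If the result is below $k$ then $\tau > \bar{x}$, so $\ell \leftarrow \bar{x}$; the median property guarantees at least $|\mathcal{A}|/2$ sequences had their next value in $(\ell_{\text{old}}, \bar{x}]$, which shrinks $k - r(\ell, \mathcal{A})$ by at least $|\mathcal{A}|/2$. Otherwise $\tau \le \bar{x}$, so $u \leftarrow \bar{x}$ and I purge the set $\mathcal{B}$ of sequences whose next value is at or above $u$ (at least $|\mathcal{A}|/2$ by the median property), subtracting $\sum_{A \in \mathcal{B}} r(\ell, A)$ from $k$ to account for their constant contribution to ranks in the open interval $(\ell, u)$. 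The under-compensation noted earlier in the paper arises because a purged sequence with next value exactly $u$ contributes one more to $r(u, \tilde{\mathcal{A}})$ than to $r(x, \tilde{\mathcal{A}})$ for $x < u$; I record this endpoint correction in a separate offset used only for queries that land at the current $u$. A short cleanup after each $\ell$-advance re-runs the purge for any sequences whose next value has now climbed to $u$ or above.

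For correctness, every purged sequence's contribution to $r(x, \tilde{\mathcal{A}})$ is a known function on $(\ell, u]$ (a constant on the open interval, plus the endpoint correction at $u$), so folding these constants into $k$ leaves the criterion of Observation~\ref{obs:tau-def} valid on the reduced problem. For the runtime I use the potential $\Phi = |\mathcal{A}| + (k - r(\ell, \mathcal{A}))$: the coarseness hypothesis and Lemma~\ref{lem:lower-rank-coarse-solution} give $\Phi_0 = O(n)$, each iteration does $O(|\mathcal{A}|)$ work, and each iteration decreases $\Phi$ by $\Omega(|\mathcal{A}|)$, so the total work telescopes to $O(n)$. The main technical obstacle is precisely this offset bookkeeping, because the endpoint correction must be updated whenever $u$ changes or is about to be reached by a query; once $\Phi$ drops to a constant, $O(1)$ candidates remain in $(\ell, u]$, and $\tau$ is identified by scanning them in order against the adjusted $k$, again via Observation~\ref{obs:tau-def}.
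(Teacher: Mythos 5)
Your proposal is correct and follows essentially the same route as the paper's Appendix~\ref{sec:proof}: median of $G(l,\mathcal{A})$, two-sided narrowing with rank-compensated removal of exhausted sequences, an explicit endpoint offset for the upper bound (the paper's variable $m$), and a potential-function argument giving $O(n)$ total work. The only differences are cosmetic (you purge every sequence whose next value equals $u$ and finish with an $O(1)$ scan, whereas the paper retains one representative and folds that check into the loop's termination condition).
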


For space reasons we defer a detailed proof to Appendix~\ref{sec:proof}.

\begin{algorithm}[t]
\caption{\textsc{CoarseToExact}($\tilde{\mathcal{A}}$, $\tilde{k}$, $\xi$)}
\label{alg:coarse-to-true-solution}
\begin{algorithmic}[1]
  \REQUIRE $\tilde{\mathcal{A}}$: set of increasing sequences; $\tilde{k}$: positive integer; $\xi$: coarse solution with $r(\xi, \tilde{\mathcal{A}}) < k$
  \ENSURE the value of the $\tilde{k}$th smallest element in $U(\mathcal{\tilde{A}})$

  \STATE $\mathcal{A} \leftarrow \tilde{\mathcal{A}}$, $k \leftarrow \tilde{k}$
  \STATE $l \leftarrow \xi$, $u \leftarrow \infty$, $m \leftarrow 0$
  \REPEAT
    \STATE $\bar{x} \leftarrow \text{median of } G(l, \mathcal{A})$
    \IF{$r(\bar{x}, \mathcal{A}) < k$}
      \STATE $l \leftarrow \bar{x}$
    \ELSE
      \STATE $u \leftarrow \bar{x}$
      \STATE $m \leftarrow 0$
    \ENDIF

    \IF{$| \{A : G(l, A) = u \} | \ge 1$}
      \STATE $m \leftarrow m + | \{A : G(l, A) = u \} | - 1$
    \ENDIF
    \STATE $\mathcal{A}' \leftarrow \{A : A \in \mathcal{A} \text{ and } G(l, A) < u \} \cup \{ \text{any one } A \in \mathcal{A} \text{ such that } G(l, A) = u \}$
    \STATE $k \leftarrow k - \sum_{A \in \mathcal{A} \setminus \mathcal{A}'} r(l, A)$
    \STATE $\mathcal{A} \leftarrow \mathcal{A}'$
  \UNTIL{$G(l, \mathcal{A})$ has only one value $t$ \AND ( $r(t, \mathcal{A}) \ge k$ \OR ($t = u$ \AND $r(t, \mathcal{A}) \ge k - m$) )}

  \RETURN $t$
\end{algorithmic}
\end{algorithm}

\subsection{Coarse Solution for Arithmetic Sequences}

In this section, we focus on the special case where every sequence in $\mathcal{A}$ is an arithmetic sequence. In particular, each sequence $A$ is of the form $A(j) = x_A + y_A \cdot j$ with $y_A > 0$, for $j = 0, 1, 2, \dots$. In this case, the rank function is given by
\begin{align}\label{eq:r(x)-ap}
  r(x) = \sum_{A \in \mathcal{A}} r(x, A)
       = \sum_{A \in \mathcal{A}} \left(1 + \left\lfloor \frac{x - x_A}{y_A} \right\rfloor \right) I_{x \ge x_A}
\end{align}
where $I_{x \ge x_A}$ is the indicator function that is $1$ when $x \ge x_A$ and $0$ otherwise.

We can think of the problem of finding the value of the $k$th smallest element as ``inverting'' the rank function to find $x$ such that $r(x)$ is close to $k$. Of course, the inverse of $r(\cdot)$ does not make sense, since the function is neither one-to-one nor onto $\mathbb{N}$. However, if we drop the floor and the plus one, and consider
\begin{align}\label{eq:r(x)-aps}
  s(x) = s(x, \mathcal{A}) = \sum_{A \in \mathcal{A}} s(x, A) = \sum_{A \in \mathcal{A}} \left( \frac{x - x_A}{y_A} \right) I_{x \ge x_A},
\end{align}
the resulting function is piecewise linear with a well-defined inverse.

Call the sequence $A$ \emph{contributing} for $k$ if $s(A(0)) = s(x_A) \le k$. To invert $s(x)$, we need to first find the contributing sequences of $\mathcal{A}$ for $k$. This can be done in $O(n)$ time, as shown in the next lemma.

\begin{lemma}\label{lem:contrib-seqs}
Suppose $\mathcal{A}$ is a set of arithmetic sequences of the form $A(j) = x_A + y_A \cdot j$ with $y_A > 0$, then the contributing sequences of $\mathcal{A}$ can be found in $O(n)$ time.
\end{lemma}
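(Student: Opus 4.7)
The plan is to exploit monotonicity of $s$ and apply a linear-time median-selection framework, avoiding any explicit sort. First I would observe that $s(x)$, being a sum of the nondecreasing functions $x \mapsto I_{x \ge x_A}(x - x_A)/y_A$, is itself nondecreasing in $x$. Consequently, if $x_A \le x_{A'}$ then $s(x_A) \le s(x_{A'})$, so the contributing sequences are precisely those whose starting points $x_A$ are smallest; they form a ``prefix'' of $\mathcal{A}$ under the ordering by $x_A$. It therefore suffices to determine, without actually sorting, which prefix.

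I would implement this with a divide-and-conquer routine maintaining an unclassified set $T \subseteq \mathcal{A}$ (initially $\mathcal{A}$) together with two running scalars $\alpha = \sum_{A \in C} 1/y_A$ and $\beta = \sum_{A \in C} x_A/y_A$, where $C \subseteq \mathcal{A} \setminus T$ is the set of sequences already declared contributing. A window $[l,u]$ is also maintained so that every $A \in C$ has $x_A \le l$, every discarded non-contributing sequence has $x_A \ge u$, and every $A \in T$ has $l < x_A < u$. This yields the invariant
\[
  s(x) \;=\; x\alpha - \beta + \sum_{A \in T,\; x_A \le x} \frac{x - x_A}{y_A}
  \qquad \text{for all } x \in [l,u].
\]
Each iteration would pick the median $x_m$ of $\{x_A : A \in T\}$ in $O(|T|)$ time (via median-of-medians), evaluate the displayed formula at $x_m$ in $O(|T|)$ time, and compare to $k$. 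If $s(x_m) \le k$, then by monotonicity every $A \in T$ with $x_A \le x_m$ is contributing, so these are moved to $C$ (updating $\alpha$, $\beta$, and $l$) and $T$ is replaced by $\{A \in T : x_A > x_m\}$; otherwise every $A \in T$ with $x_A \ge x_m$ is non-contributing, so these are discarded ($u$ is updated to $x_m$) and $T$ is replaced by $\{A \in T : x_A < x_m\}$. Either branch halves $|T|$.

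Since iteration $i$ does $O(n/2^i)$ work, the total telescopes to $O(n)$, and when $T$ is exhausted the accumulated $C$ is exactly the desired set of contributing sequences. The key point that makes the algorithm linear rather than $\Theta(n\log n)$ is the bookkeeping via $\alpha$ and $\beta$: it lets us evaluate $s(x_m)$ in work proportional only to $|T|$, absorbing the contribution of already-classified sequences into two scalars updated in amortized $O(1)$ per element. The main subtlety to verify is that the invariant for $s$ really is preserved under both branches, in particular that no sequence already discarded as non-contributing can contribute to $s(x)$ for any $x$ in the current window $[l,u]$; this follows immediately from $x_A \ge u$ for every such sequence. Everything else is routine divide-and-conquer analysis.
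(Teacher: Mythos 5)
Your proposal is correct and is essentially the same as the paper's proof: Algorithm~\ref{alg:find-contributing-seqs} performs the identical median-based prune-and-search on the starting points $x_A$, and the paper likewise obtains the $O(n)$ bound by observing that the contribution of already-classified contributing sequences to $s(\cdot)$ can be maintained as a linear function via two running coefficients (your $\alpha$ and $\beta$), so each iteration costs only $O(|T|)$. Your explicit window invariant is a slightly more detailed write-up of the same argument.
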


\begin{proof}
The algorithm to find the contributing sequences is given in Algorithm~\ref{alg:find-contributing-seqs}. In the main loop, we repeatedly compute the median $\bar{x}$ of the first element of remaining sequences $\mathcal{A}$. Comparison of $s(\bar{x}, \tilde{\mathcal{A}})$ to $k$ then eliminates a portion of those sequences and determines which of the eliminated sequences are contributing:
\begin{enumerate}
\item If $s(\bar{x}, \tilde{\mathcal{A}}) > k$: None of the sequences with $x_A \ge \bar{x}$ can be contributing, so we reduce $\mathcal{A}$ to only those with $x_A < \bar{x}$.
\item If $s(\bar{x}, \tilde{\mathcal{A}}) \le k$: The sequences with $x_A \le \bar{x}$ are contributing, so we add them all to the list $\mathcal{C}$, and reduce $\mathcal{A}$ to only those sequences with $x_A > \bar{x}$.
\end{enumerate}
The loop repeats until $\mathcal{A}$ is empty, at which time $\mathcal{C}$ is the output.

The algorithm's correctness follows from the fact that a sequence is added to $\mathcal{C}$ if and only if it is contributing. For the algorithm to run in $O(n)$ time, each iteration of the loop must run in $O(|\mathcal{A}|)$ time. In particular, we must compute $s(\bar{x}, \tilde{\mathcal{A}}) = s(\bar{x}, \mathcal{A}) + s(\bar{x}, \mathcal{C})$ in $O(|\mathcal{A}|)$ time. To do this, note that when a sequence $A$ is added to $\mathcal{C}$, all subsequent $\bar{x}$ have $\bar{x} > x_A$. This means we can consider $s(\cdot, \mathcal{C})$ as a linear function, whose value can be computed in constant time by keeping track of the two coefficients of the linear function and updating them whenever sequences are added to $\mathcal{C}$.
\end{proof}

\begin{algorithm}[tb]
\caption{\textsc{FindContributingSequences}($\tilde{\mathcal{A}}$, $k$)}
\label{alg:find-contributing-seqs}
\begin{algorithmic}[1]
  \REQUIRE $\tilde{\mathcal{A}}$: set of arithmetic sequences of the form $A(j) = x_A + y_A j$; $k$: integer $> n$
  \ENSURE subset $\mathcal{C}$ of $\tilde{\mathcal{A}}$ of contributing sequences for $k$
  \STATE $\mathcal{A} \leftarrow \tilde{\mathcal{A}}$, $\mathcal{C} \leftarrow \emptyset$.
  \WHILE{$\mathcal{A}$ is nonempty}
    \STATE $\bar{x} \leftarrow \text{median of } \{x_A : A \in \mathcal{A} \}$
    \IF{$s(\bar{x}, \tilde{\mathcal{A}}) > k$}
      \STATE $\mathcal{A} \leftarrow \{A : A \in \mathcal{A} \text{ and } x_A < \bar{x}\}$
    \ELSE
      \STATE $\mathcal{C} \leftarrow \mathcal{C} \cup \{A : A \in \mathcal{A} \text{ and } x_A \le \bar{x} \}$
      \STATE $\mathcal{A} \leftarrow \{A : A \in \mathcal{A} \text{ and } x_A > \bar{x}\}$
    \ENDIF
  \ENDWHILE
  \RETURN $\mathcal{C}$
\end{algorithmic}
\end{algorithm}

Once the contributing sequences $\mathcal{C}$ are found, we can restrict our search for $x$ to the last interval of $s(\cdot, \mathcal{C})$, which is a linear function without breakpoints and can be inverted easily. In particular, the inverse is given by
\begin{align}\label{eq:s_inv(k)}
  s^{-1}(k) = \left( \sum_{A \in \mathcal{C}} \frac{1}{y_A} \right)^{-1} \left( k + \sum_{A \in \mathcal{C}} \frac{x_A}{y_A} \right)
\end{align}
which can be interpreted as a weighted and shifted harmonic mean of the slopes of contributing arithmetic sequences. Since $r(x) - s(x) \le n$, $s^{-1}(k)$ is a coarse solution, applying \autoref{thm:coarse-to-true-solution} immediately gives the following.

\begin{theorem}\label{thm:ap}
If $\mathcal{A}$ is a set of arithmetic sequences, then the value of the $k$th smallest element in $U(\mathcal{A})$ can be found in $O(n)$ time.
\end{theorem}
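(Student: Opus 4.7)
The plan is to assemble Theorem~\ref{thm:ap} directly from the machinery developed earlier in the section: use Lemma~\ref{lem:contrib-seqs} to identify the contributing sequences, use the closed form in~\eqref{eq:s_inv(k)} to invert $s(\cdot,\mathcal{C})$ on its last linear piece, verify that the result is a coarse solution, and then invoke Theorem~\ref{thm:coarse-to-true-solution} to convert it to the exact answer.

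In more detail, I would first dispatch a small-$k$ corner case: if $k \le n$, then the $k$th smallest element of $U(\mathcal{A})$ is dominated by first terms $A(0) = x_A$, and can be found in $O(n)$ time with one linear-time selection (possibly augmented by a constant number of second-term candidates per sequence to handle ties, but one can alternatively reduce to the main case by padding). So assume $k > n$, which is the regime required by Lemma~\ref{lem:contrib-seqs}. Apply that lemma to obtain the set $\mathcal{C}$ of contributing sequences in $O(n)$ time. Note that $\mathcal{C}$ is never empty: the sequence with the smallest $x_A$ satisfies $s(x_A)=0 \le k$.

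Next, compute $\xi := s^{-1}(k)$ using~\eqref{eq:s_inv(k)}, which is a single pass over $\mathcal{C}$ and hence takes $O(n)$ time. I need to check that this value really is $s(\cdot,\mathcal{A})^{-1}(k)$: by definition of contributing, $s(x_A,\tilde{\mathcal{A}}) \le k$ for every $A \in \mathcal{C}$, while $s(x_A,\tilde{\mathcal{A}}) > k$ for every $A \notin \mathcal{C}$; hence the horizontal level $k$ of the piecewise-linear function $s(\cdot,\tilde{\mathcal{A}})$ is crossed strictly to the right of every $x_A$ with $A \in \mathcal{C}$ and strictly to the left of every $x_A$ with $A \notin \mathcal{C}$. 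Thus in a neighbourhood of the crossing, $s(\cdot,\tilde{\mathcal{A}})$ agrees with $s(\cdot,\mathcal{C})$, and inverting the latter by~\eqref{eq:s_inv(k)} yields the correct $\xi$.

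The key inequality turning this into a coarse solution is the per-sequence bound $0 \le r(x,A) - s(x,A) \le 1$, which is immediate from~\eqref{eq:r(x)-ap} and~\eqref{eq:r(x)-aps}: the only differences are the dropped floor (contributing at most $1$) and the indicator-conditioned ``$+1$'' which shifts by at most $1$. Summing over the $n$ sequences gives $|r(\xi) - s(\xi)| \le n$, and since $s(\xi) = k$, we have $|r(\xi) - k| \le n$, so $\xi$ is a coarse solution. Feeding $\xi$ into Theorem~\ref{thm:coarse-to-true-solution} (after Lemma~\ref{lem:lower-rank-coarse-solution} if $r(\xi,\mathcal{A}) \ge k$) produces the $k$th smallest value in $O(n)$ time, and the total running time is $O(n)$ as claimed.

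The main obstacle I expect is the bookkeeping around the range in which the closed-form inversion is valid: one must be careful that $\xi$ produced by~\eqref{eq:s_inv(k)} is not accidentally less than some $x_A$ with $A \in \mathcal{C}$ (which would mean $s(\xi,\tilde{\mathcal{A}})$ differs from $s(\xi,\mathcal{C})$) nor greater than some $x_A$ with $A \notin \mathcal{C}$. Both follow from the definition of contributing once one unpacks it, but this is the place where a careless proof could go wrong; everything else is an assembly of already-proved pieces.
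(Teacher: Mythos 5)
Your proof follows essentially the same route as the paper's: find the contributing sequences via Lemma~\ref{lem:contrib-seqs}, invert the linearized rank function via~\eqref{eq:s_inv(k)}, observe that $0 \le r(x) - s(x) \le n$ makes $s^{-1}(k)$ a coarse solution, and hand it to Theorem~\ref{thm:coarse-to-true-solution}. The extra care you take (the $k \le n$ corner case and the check that the closed-form inversion is applied on the correct linear piece) fills in details the paper leaves implicit, but it is the same argument.
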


While not necessary for subsequent sections, a slight generalization can be made here. The main property of arithmetic sequences we used is that $r(x, A)$ is approximable by a function that can be written as $w_A f(x) + u_A$, where $f$ is an invertible function independent of $A$; $w_A, u_A$ are constants; and $w_A > 0$. When $r(x,A)$ can be approximated in this way, the inverse is given by:
\begin{align}
  s^{-1}(k) = f^{-1}\left( \frac{k - \sum_{A \in \mathcal{C}} u_A}{\sum_{A \in \mathcal{C}} w_A} \right)
\end{align}
where $\mathcal{C}$ is the set of contributing sequences.
For example, if $A(j) = 2^j / v_A$, then the rank function is given by $r(x, A) = \lfloor \log (x v_A) \rfloor$ and is approximable by $\log x + \log v_A$. The coarse solution is then given by
\begin{align} \label{eq:2^j-approx}
  s^{-1}(k) = e^{\frac{1}{|\mathcal{C}|} \left( k - \sum_{A \in \mathcal{C}} \log v_A \right)}.
\end{align}

\subsection{Coarse Solution for Approximately-Arithmetic Sequences}

In this section, we show how to handle more general sequences for highest averages methods. Our strategy works as long as the divisor sequence $D = \{d_j\}$ is close to an arithmetic progression $E = \{e_j\}$, where closeness here means that there is a constant $c$ so that $|d_j - e_j| \le c$ for every $j \ge 0$. 

Suppose $\mathcal{A}$ is the set of (arithmetic) sequences induced by the arithmetic divisor sequence $E$ and $\mathcal{B}$ is the set of sequences induced by the divisor sequence $D$. We show that, if $E$ and $D$ are close, then the rank of every number $x$ in $\mathcal{A}$ is within a constant of the rank in $\mathcal{B}$. This means that a coarse solution for $\mathcal{A}$ is also a coarse solution for $\mathcal{B}$. Hence, to do apportionment for these more general sequences, we just use the results from the previous section to find a coarse solution for the approximating arithmetic sequences, then apply \autoref{thm:coarse-to-true-solution} to the original sequences with that coarse solution.

\begin{lemma}\label{lem:arithm-sequence-approx}
Let $A$ be an arithmetic progression $A(j) = x_A + y_A j$ with slope $y_A > 0$, and let $B$ be an increasing sequence such that for every $j$, $|A(j) - B(j)| \le c y_A$ for some constant $c$. Then $|r(x, A) - r(x, B)| \le c'$ for another constant $c'$.
\end{lemma}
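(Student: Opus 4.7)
The plan is to compare $r(x,A)$ and $r(x,B)$ index by index using the pointwise bound $A(j) - cy_A \le B(j) \le A(j) + cy_A$ for every $j \ge 0$. Since $A$ is arithmetic, we have the closed form $r(x,A) = \max\bigl(0, \lfloor (x - x_A)/y_A \rfloor + 1\bigr)$, so it suffices to bound $r(x,B)$ above and below in terms of $(x - x_A)/y_A$.

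For the upper bound on $r(x,B)$, I would argue that if $B(j) \le x$ then the lower pointwise inequality $A(j) - cy_A \le B(j)$ forces $x_A + j y_A \le x + cy_A$, i.e., $j \le (x - x_A)/y_A + c$. Hence $r(x,B) \le \lfloor (x - x_A)/y_A + c\rfloor + 1 \le r(x,A) + \lceil c \rceil + 1$, where the edge case $x < x_A$ is handled by noting that the bound is still at most $\lceil c \rceil + 1 = r(x,A) + \lceil c\rceil + 1$. For the matching lower bound, I would use the upper pointwise inequality $B(j) \le A(j) + cy_A$: whenever $0 \le j \le (x - x_A)/y_A - c$, we have $B(j) \le x$, so $r(x,B) \ge \lfloor (x - x_A)/y_A - c\rfloor + 1 \ge r(x,A) - \lceil c \rceil - 1$. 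When the right side is negative, the inequality $r(x,B) \ge 0 \ge r(x,A) - \lceil c \rceil - 1$ still holds because $r(x,A) \le \lceil c \rceil + 1$ in that regime.

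Combining the two bounds yields $|r(x,A) - r(x,B)| \le \lceil c \rceil + 1$, so we may take $c' = \lceil c \rceil + 1$. I do not expect any real obstacle here; the only thing to be careful about is the boundary case $x < x_A$ (and symmetrically $x < B(0)$), where one or both ranks equal $0$, but in every such case the explicit bound on $r(x,A)$ or $r(x,B)$ that comes out of the pointwise inequalities is already an $O(c)$ constant, so the clean estimate $|r(x,A) - r(x,B)| \le \lceil c \rceil + 1$ survives the case analysis.
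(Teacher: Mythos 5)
Your proof is correct, but it takes a different route from the paper's. The paper fixes $x$, sets $a = r(x,A)$ and $b = r(x,B)$, and runs a two-case analysis on whether $A(b) \ge x$ or $A(b) < x$, in each case bounding $|b-a|$ by showing that the interval between $A(a)$ and $A(b)$ has length $O(c\,y_A)$ and hence contains $O(c)$ terms of the arithmetic progression; this yields $c' = 3c+1$. You instead exploit the closed form $r(x,A) = \max\bigl(0, \lfloor (x-x_A)/y_A\rfloor + 1\bigr)$ and convert the pointwise sandwich $A(j) - cy_A \le B(j) \le A(j) + cy_A$ directly into a sandwich on the count of indices $j$ with $B(j) \le x$, avoiding any case split on the relative position of $A(b)$ and $x$ and giving the slightly sharper constant $c' = \lceil c\rceil + 1$. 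Your version also never uses monotonicity of $B$ (only that its rank counts elements at most $x$), whereas the paper's step $x - B(b) < B(b+1) - B(b)$ does; on the other hand, the paper's element-distance argument would adapt more readily to a base sequence $A$ that merely has bounded gaps rather than an exact arithmetic closed form. Your treatment of the boundary cases ($x < x_A$, $x < B(0)$, and the negative right-hand side in the lower bound) is adequate, so I see no gap.
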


\begin{proof}
Fix $x$. Let $a = r(x, A)$ and $b = r(x, B)$. Note that $A(a)$ (respectively $B(b)$) is largest value of $A$ (respectively $B$) less than or equal to $x$. We have two cases:
\begin{description}
\item[Case $A(b) \ge x$:] Note that $a \le b$ since $A(a) \le x \le A(b)$. Furthermore,
\[
  A(b) - A(a) \le [A(b) - B(b)] + [x - A(a)] \le c y_A + y_A.
\]
This means that there are at most $c+1$ elements of $A$ between $A(a)$ and $A(b)$, so $b - a \le c + 1$.

\item[Case $A(b) < x$:] Note that $b \le a$. Furthermore, 
\begin{align*}
  x - B(b) &  <  B(b+1) - B(b) \\
           & \le |B(b+1) - A(b+1)| + |A(b+1) - A(b)| + |A(b) - B(b)| \\
           & \le c y_A + y_A + c y_A = (2c + 1) y_A
\end{align*}
It follows that
$
  A(a) - A(b) \le [x - B(b)] + |B(b) - A(b)| \le (2c + 1) y_A + c y_A,
$
and $a - b \le 3c+1$.
\end{description}
In both cases, we have $|r(x, A) - r(x, B)| = |b - a| \le \text{const}$, as needed.
\end{proof}

From the lemma, the following is immediate.

\begin{theorem}\label{thm:divisors-approx}
Let the divisor sequence $E=\{e_j\}$ be an arithmetic progression, and suppose $D=\{d_j\}$ is another divisor sequence such that for every $j \ge 0$, $|d_j - e_j| \le c$ for some constant $c$. If $\mathcal{A}$ is the set of sequences induced by $E$ and $\mathcal{B}$ the set of sequences induced by $D$, then for every $x$, $|r(x, \mathcal{A}) - r(x, \mathcal{B})| \le c'n$, for some constant $c'$.
\end{theorem}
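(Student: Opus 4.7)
The plan is to reduce the theorem directly to Lemma~\ref{lem:arithm-sequence-approx} applied pointwise to each pair of corresponding sequences, then sum the resulting constant-sized errors over all $n$ entities.

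First I would unpack notation. Let $E = \{e_j\}$ have common difference $\delta > 0$, and for each entity $i$ let $v_i$ be its score. Then $\mathcal{A}$ consists of the sequences $A_i(j) = e_j/v_i$ and $\mathcal{B}$ of the sequences $B_i(j) = d_j/v_i$. Each $A_i$ is an arithmetic progression with slope $y_{A_i} = \delta/v_i$, and the hypothesis $|d_j - e_j| \le c$ gives
\[
  |A_i(j) - B_i(j)| = \frac{|e_j - d_j|}{v_i} \le \frac{c}{v_i} = \frac{c}{\delta}\, y_{A_i}.
\]
Thus the constant $c/\delta$ controlling the pointwise deviation in Lemma~\ref{lem:arithm-sequence-approx} is the same for every $i$, independent of $v_i$.

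Next I would invoke Lemma~\ref{lem:arithm-sequence-approx} for each $i$ separately. The lemma yields a constant $c'$ (a function of $c/\delta$ only, so uniform in $i$) such that $|r(x, A_i) - r(x, B_i)| \le c'$ for every $x$ and every $i$. Summing this bound over the $n$ entities and applying the triangle inequality gives
\[
  |r(x, \mathcal{A}) - r(x, \mathcal{B})|
  = \Bigl| \sum_{i=1}^{n} \bigl( r(x, A_i) - r(x, B_i) \bigr) \Bigr|
  \le \sum_{i=1}^{n} |r(x, A_i) - r(x, B_i)|
  \le c' n,
\]
which is exactly the desired conclusion.

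There is essentially no hard step here; the only point requiring care is the verification that the per-sequence constant produced by Lemma~\ref{lem:arithm-sequence-approx} does not grow with $v_i$ or $n$. This is handled by the observation above that the deviation $|A_i(j) - B_i(j)|$ and the slope $y_{A_i}$ scale by the same factor $1/v_i$, so their ratio is controlled by the absolute constant $c/\delta$ coming from the closeness of $D$ and $E$. Once that is noted, the theorem is indeed immediate from the lemma, as claimed.
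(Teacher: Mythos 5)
Your proof is correct and follows essentially the same route as the paper: apply Lemma~\ref{lem:arithm-sequence-approx} to each corresponding pair $(A_i, B_i)$, observing that the deviation $c/v_i$ and the slope scale identically in $v_i$, and then sum the per-sequence constants over the $n$ entities. Your version is in fact slightly more careful than the paper's, which tacitly takes the common difference of $E$ to be $1$, whereas you track the ratio $c/\delta$ explicitly.
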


\begin{proof}
Let $A \in \mathcal{A}$ and $B$ be the corresponding sequence in $\mathcal{B}$ with score $v$. We have
\[
  |A(j) - B(j)| = \frac{|d_j - e_j|}{v} \le \frac{c}{v}.
\]
Note that the slope of $A$ is $1/v$, so by \autoref{lem:arithm-sequence-approx}, $|r(x, A) - r(x, B)| \le c'$ for some constant $c'$. Summing over all sequences gives the desired result.
\end{proof}

The strategy in this section works for all the methods in \autoref{tab:divisors} with non-arithmetic divisor sequences. The Huntington--Hill and Dean methods have divisor sequences where each element of the sequence is the geometric or harmonic mean respectively of consecutive natural numbers. Hence, each element is within $1$ of the arithmetic sequence $d_j = j$, so finding a coarse solution under the divisor sequence $d_j = j$ is enough to find a coarse solution for those two methods.
For methods like the modified Sainte-Lagu\"e method, where the divisor sequence is arithmetic except for a constant number of elements, we may ignore the non-arithmetic elements and find a coarse solution solely from the remaining elements that do form an arithmetic progression.

\section{Conclusion}
We have shown that many commonly used apportionment methods can be implemented in time linear in the size of the input (vote totals for each entity), based on a transformation of the problem into selection in multisets formed from unions of arithmetic or near-arithmetic sequences.
Our method can be extended to selection in unions of other types of sequences, as long as the rank functions of the sequences can be approximately inverted and aggregated.
It would be of interest to determine whether  forms of diophantine approximation from other application areas can be computed as efficiently. 

\subsection*{Acknowledgements}
This research was supported in part by NSF grant 1228639, and
ONR grant N00014-08-1-1015

{\raggedright
\bibliographystyle{splncs}
\bibliography{paper}}

\newpage
\appendix

\section{Proof of \autoref{thm:coarse-to-true-solution}}
\label{sec:proof}

We present an algorithm that computes the value of the $k$th smallest element of $U(\mathcal{A})$ in $O(n)$ time (see Algorithm~\ref{alg:coarse-to-true-solution}). By \autoref{lem:lower-rank-coarse-solution}, we may assume $r(\xi, \tilde{\mathcal{A}}) < k$. 

The algorithm starts by initializing the upper bound $u$ to $\infty$ and the lower bound $l$ to $\xi$. Next, it initializes a variable $m$ to be zero. This is the variable that will compensate for the under-adjustment to $k$ at the upper bound $u$ when certain sequences are removed. After that, the main loop of the algorithm starts, which repeats until $G(l, \mathcal{A})$ has only one value among its elements and that value has rank $\ge k$ (or $\ge k - m$ if that value is $u$). This condition guarantees that $l$ is the tightest lower bound to the desired value, in the sense that there can be no other value in $U(\mathcal{A})$ between $l$ and the value of the $k$th smallest element. Let $\bar{x} = \text{median of } G(l, \mathcal{A})$. In the loop, we have two cases:
\begin{enumerate}
\item Case 1: $r(\bar{x}, \mathcal{A}) < k$. In this case, the value of the $k$th smallest element is to the right of $\bar{x}$, so we adjust the lower bound $l$ to $\bar{x}$.
\item Case 2: $r(\bar{x}, \mathcal{A}) \ge k$. In this case, the value of the $k$th smallest element is to the left of $\bar{x}$, so we adjust the upper bound $u$ to $\bar{x}$. Since $u$ is updated, we also reset $m$ to be zero.
\end{enumerate}
Lastly, we need to remove the sequences in $\mathcal{A}$ that are no longer useful. In particular, we remove all sequences $A$ in $\mathcal{A}$ with $G(l, A) > u$ and all but one (arbitrarily chosen) sequence with $G(l, A) = u$. We need to remove the sequences with $G(l, A) = u$ so that we can guarantee enough values are eliminated in this step, but we cannot remove all of them since $u$ might actually be the desired value. We increment $m$ by the number of these sequences with $G(l, A) = u$ that are removed. Then we update $k$ by subtrating the contribution from the removed sequences; namely, we set $k = k - \sum_{A \in \mathcal{A} \setminus \mathcal{A}'} r(l, A)$ where $\mathcal{A} \setminus \mathcal{A}'$ is the set of removed sequences. When the main loop ends, $G(l, \mathcal{A})$ should contain only a single value; that value is the output of the algorithm.

\paragraph{Correctness.} Let $\mathcal{B} = \tilde{\mathcal{A}} \setminus \mathcal{A}$. For each $A \in \mathcal{B}$, let $l_A$ be the value of $l$ when the sequence $A$ was removed. To prove that the algorithm works, we show that the main loop satisfies the following invariant conditions:
\begin{enumerate}
\item $\mathcal{A} \neq \emptyset$
\item $r(l, \mathcal{A}) < k$
\item $\tilde{k} = k + \sum_{A \in \mathcal{B}} r(l_A, A)$
\item for all $x \in [l, u)$, $r(x, \mathcal{B}) = \sum_{A \in \mathcal{B}} r(l_A, A)$.
\item $r(u, \mathcal{B}) = m + \sum_{A \in \mathcal{B}} r(l_A, A)$
\end{enumerate}
Initially, condition (2) is true because $l = \xi$, and the other conditions are true because $\tilde{\mathcal{A}} = \mathcal{A}$, $\tilde{k} = k$, and no sequences have been removed. At the end of each iteration of the loop, conditions (1) to (3) are true from the way $\mathcal{A}$, $l$, and $k$ are updated, and conditions (4) and (5) are true because $l$ only increases, $u$ only decreases, and a sequence is removed only when it has no elements between $l$ and $u$.

Consider when the loop terminates. By condition (1), the return value makes sense. By conditions (3) to (5) and the terminating condition, we have
\begin{align} \label{eq:r(t)>=k}
  r(t, \tilde{\mathcal{A}}) =   r(t, \mathcal{A}) + r(t, \mathcal{B})
                            \ge k - m + m + \sum_{A \in \mathcal{B}} r(l_A, A)
                            =   \tilde{k}
\end{align}
Similarly, by conditions (2) to (4), we have
\begin{align} \label{eq:r(l)<k}
  r(l, \tilde{\mathcal{A}}) = r(l, \mathcal{A}) + r(l, \mathcal{B})
                            < k + \sum_{A \in \mathcal{B}} r(l_A, A)
                            = \tilde{k}
\end{align}
By the terminating condition and the fact that removed sequences have no values between $l$ and $u$, $U(\tilde{\mathcal{A}})$ has no elements between $l$ and $t$. Applying Observation~\ref{obs:tau-def}, the output $t$ must be the value of the $k$th smallest element.

\paragraph{Run Time.} We now show that the algorithm terminates in $O(n)$ time. Let $\tau$ be the value of the $k$th smallest element. After every iteration, either $l$ gets closer to $\tau$ by at least $\lfloor |\mathcal{A}| / 2 \rfloor$, or the number of remaining sequences reduces by at least $\lfloor |\mathcal{A}| / 2 \rfloor$.
At some point, $l$ must be the tightest lower bound to $\tau$. When that happens, every subsequent iteration satisfies the second case and reduces the size of $\mathcal{A}$ until the terminating condition is true. This shows that the algorithm terminates. 
Now, each iteration takes $O(|\mathcal{A}|)$ time, so as in \autoref{lem:lower-rank-coarse-solution}, each iteration makes progress proportional to the amount of work. The distance between $l$ and $\tau$ is $O(n)$ and the number of sequences is $n$, so the total running time is $O(n)$.
\qed

\end{document}